\newtheorem{theorem}{Theorem}
\newtheorem{lemma}[theorem]{Lemma}
\newtheorem{corollary}[theorem]{Corollary}
\newcommand{\ceiling}[1]{\left\lceil #1 \right\rceil}
\begin{document}

\begin{frontmatter}

\title{Streaming Algorithms for \\ Multitasking Scheduling with Shared Processing
}

\author[inst1]{Bin Fu}

\affiliation[inst1]{organization = {Department of Computer Science},
addressline = {University of Texas Rio Grande Valley}, city = { Edinburg}, state = {TX}, postcode={78539}, country={USA}}

\author[inst2]{Yumei Huo}

\affiliation[inst2]{organization = {Department of Computer Science},
addressline = {College of Staten Island, CUNY}, city = { Staten Island}, state = {NY}, postcode={10314}, country={USA}}

\author[inst2]{Hairong Zhao}

\affiliation[inst3]{organization = {Department of Computer Science},
addressline = {Purdue University Northwest}, city = { Hammond}, state = {IN}, postcode={46323}, country={USA}}

\begin{abstract}

In this paper, we design the first streaming algorithms for the problem of multitasking scheduling on parallel machines with shared processing.
In one pass, our streaming approximation schemes can provide an approximate value of the optimal makespan. If the jobs can be read in two passes, the algorithm can find the schedule with the approximate value. This work not only provides an algorithmic big data solution for the studied problem, but also gives an insight into the design of   streaming algorithms for other problems in the area of scheduling.

\end{abstract}

\begin{keyword}
streaming algorithm, multitasking scheduling, shared processing, parallel machine, makespan, approximation scheme
\end{keyword}

\end{frontmatter}

\setlength{\baselineskip}{20pt}

\section{Introduction}\label{intro}
In recent years, with more and more data  generated in all applications, the dimension of the computation increases rapidly. As in many other research areas, the need for providing  solutions under big data also emerges in the area of scheduling.  In this paper,
we study the data stream model of multitasking scheduling problem. Under this data  model,  the input data is massive and cannot  be read into memory;  the goal is to design streaming algorithms to approximate the optimal solution in a few passes (typically just one) over the data and using limited space.

As Muthukrishnan addressed in his paper \cite{m05},
traditionally, data are fed from the memory and one can modify the underlying data to reflect the updates; real time queries are also simple by looking up a value in the memory, as we see in banking and credit transactions; as far as complex analyses are concerned, such as trend analysis, forecasting, etc., operations are usually performed offline. However, in the modern world, with more and more data generated in the monitoring applications such as atmospheric, astronomical, networking, financial, sensor-related fields, etc., the automatic data feeds are needed for many tasks. For example,
large amount of data need to be fed and processed in a short time to monitor complex correlations, track trends, support exploratory analyses and perform complex tasks such as classification, harmonic analysis etc. These tasks are time critical and thus it is important to process them in near-real time to accurately keep pace with the rate of stream updates and  reflect rapidly changing trends in the data.
With more data generated and more demands of data streams processing for now and in the future, the researchers are facing the questions: Given a certain amount of resources, a data stream rate and a particular analysis task, what can we (not) do?

While some methods are available for processing large amount of data of these time critical tasks, such as making things parallel, controlling data rate by sampling or shedding updates, rounding data structures to certain block boundaries, using hierarchically detailed analysis, etc., these approaches are ultimately limiting.

A natural approach to dealing with data streams involves approximations and developing   algorithmic principles for data stream algorithms. Streaming algorithms were initially studied by Munro and Paterson in 1978 (\cite{mp78}), and then by Flajolet and Martin in 1980s (\cite{fm85}). The model was formally established by Alon, Matias, and Szegedy in ~\cite{ams99} and has received a lot of attention since then.
Formally, streaming algorithms are algorithms for processing the input where some or all of of the data is not available for random access but rather arrives as a sequence of items and can be examined in only a few passes (typically just one).
The performance of streaming algorithms is measured by three factors: the number of passes the algorithm must run over the stream, the space needed and the updating time of the algorithm.

In this paper, we study the streaming algorithms for the problem of multitasking scheduling with shared processing. Over the past couple of decades, the problem of multitasking scheduling has attracted a lot of attention in the service industries where workers frequently perform multiple tasks by switching from one task to another.
For example, in health care, 21\% of hospital employees spend their working time on more than one activity \cite{olb06}, and in consulting where workers usually engage in about 12 working spheres per day \cite{gm05}.
Although in the literature some research has been done on the effect of multitasking (\cite{gm05}, \cite{vmn08}, \cite{cip14}, \cite{ssmw13}), the study on multitasking in the area of scheduling is still very limited (\cite{hll15}, \cite{hll16},
\cite{sh15}, \cite{zzc17}).

Hall, Leung and Li ( \cite{hll16} ) proposed a multitasking scheduling model that allows a team to continuously work on its main, or primary tasks while a fixed percentage of its processing capacity may be allocated to process the routinely scheduled activities as they occur.
Some examples of the routinely scheduled activities are administrative meetings, maintenance work, or meal breaks. In these scenarios, some team members need to be assigned to perform these routine activities while the remaining team members still focus on the primary tasks.
Since the routine activities are essential to the maintenance of the overall system in many situations, they are usually managed separately and scheduled independently of the primary jobs.
When these multitasking problems are modeled in the scheduling theory, a working team is viewed as a machine which may have some periods during which routine jobs and primary jobs share the processing capacity.

In \cite{hll16}, it is assumed that there is only a single machine and the machine capacity allocated to routine jobs  is the same for all routine jobs.
In this paper, we generalize this model to parallel machine environment and allow the machine capacity allocated to routine jobs to vary from one to another. In practice, it is not uncommon
that different  number of team members
are needed to perform different routine jobs during different time periods.

In many circumstances, it is necessary to have service continuously available for primary jobs, such as in many companies' customer service and technical support departments, the service must be continuous for answering customers' calls and for troubleshooting the customers' product failures.
So at least one member from the team is needed to provide these service at any time while the size of a team is typically of ten or fewer members as recommended by Dotdash Meredith Company in their management research. To model this, we allow  the capacities allocated for primary jobs on some machines to have a constant lower bound.

\subsection{Problem Definition}

Formally, our problem can be defined as follows. We are given $m$ identical machines $\{M_1, M_2, \ldots, M_m\}$ and a set $N = \{1, \ldots, n\}$ of primary jobs that are all available for processing at time 0. Each primary job $j \in N $ has a processing time $p_j$ and can be processed by any one of the machines uninterruptedly.
Each machine $M_i$ has $k_i$ shared processing intervals during which only a fraction of machine capacity can be allocated to these primary jobs due to the fact some capacity has been pre-allocated to routine jobs. We use ``sharing ratio'' to refer the fraction of the capacity allocated to the primary jobs. The total number of these intervals  is  $\tilde{n}  = \sum_{1 \le i \le m} k_i$.
For simplicity, we will treat those  intervals with full capacity as intervals with  sharing ratio 1.
 Apparently, each machine $M_i$ has  $O(k_i)$ intervals in total.
Without loss of generality, we assume that these intervals are given in  sorted order, denoted as  $I_{i,1} = (0, t_{i,1}]$, $I_{i,2} = (t_{i,1}, t_{i,2}]$, $\ldots$, and their corresponding sharing ratios are $e_{i,1}$, $e_{i,2}$, $\ldots$, all of which  are  in the range of $(0,1]$, see Figure~\ref{fig:sharing intervals}(a) for an illustration of machine intervals and Figure~\ref{fig:sharing intervals}(b) for an illustration of a schedule of primary jobs in these intervals.
For any schedule $S$, let $C_j(S)$ be the completion time of the primary job $j$ in $S$. If the context is clear, we use $C_j$ for short. The makespan of the schedule $S$ is $\max_{1\le j \le n} \{C_j\}$.
The objective is to find a schedule of the primary jobs to minimize the makespan.

\begin{figure}
 \includegraphics[width=\textwidth]{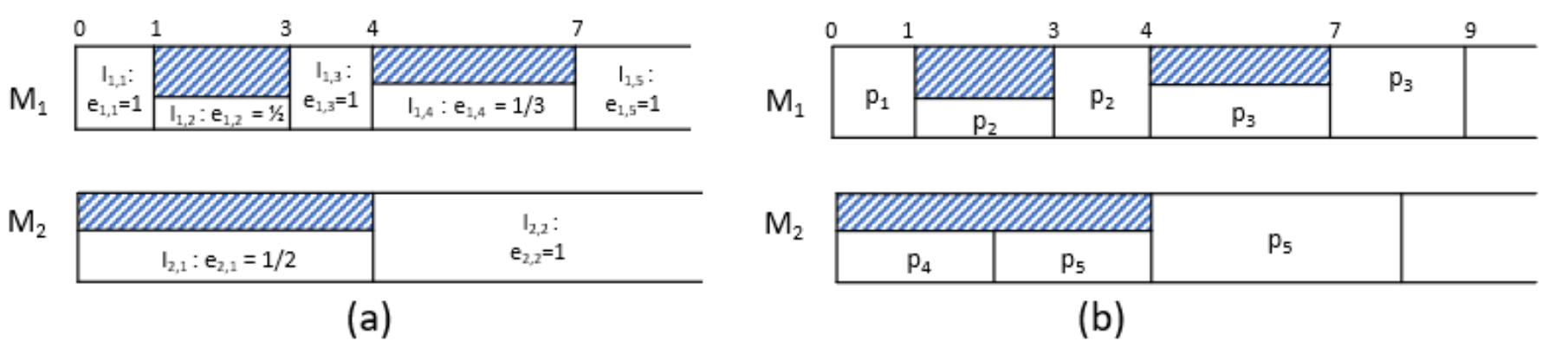}
\caption{An example of multitasking scheduling on 2 machines with shared processing, 3 routine jobs are shown as shaded intervals. (a) The intervals and the sharing ratios; (b) A schedule of  five primary jobs: $p_1 = 1$, $p_2 = 2$, $p_3 = 3$, $p_4 = 1$, $p_5 = 5$.  } \label{fig:sharing intervals}
\end{figure}

In this paper, we consider the above scheduling  problem under the data stream model.
Specifically, we study the problem that the number of primary jobs is so big that jobs' information cannot be stored in the memory but can only be scanned in one or more passes. Extending the three-field $\alpha \mid \beta \mid \gamma$ notation introduced by Graham et al. \cite{gllr79}, our problem is denoted as $P_m\mid \text{stream, share} \mid C_{max}$ if the sharing ratios are arbitrary;
and if the sharing ratio is at least $e_0$ for the intervals on the first $m_1$ ($ 1 \le m_1 \le m-1$) machines but arbitrary for other $(m-m_1)$ machines, our problem is denoted as $P_m\mid \text{stream, share, }  e_{i,k} \ge e_0 \text{ for } i \le m_1 \mid C_{max}$. The corresponding problems under the traditional data model will be denoted as $P_m\mid \text{share} \mid C_{max}$ and $P_m\mid \text{share, }  e_{i,k} \ge e_0 \text{ for } i \le m_1 \mid C_{max}$, respectively.

\subsection{Literature Review}

Various models of shared processing scheduling  have been studied in the literature. In the model studied by \cite{dk17}, \cite{hk15}, \cite{dk20},
jobs have their own private processors and can also be processed by other processors which are shared by other jobs to reduce the job’s completion time due to processing time overlap.
In the model studied by Baker and Nuttle \cite{bn80}, all the resources together are viewed as one machine which has varying availability over time and jobs are scheduled on this single machine with varying capacity.
The authors showed that a number of well-known results for classical single-machine problems can be applied with little or no modification to the corresponding variable-resource problems. Then Hirayama and Kijima \cite{hk92} studied this problem when the machine capacity varies stochastically over time. Adiri and Yehudai \cite{ay87} studied the problem on single and parallel machines such that  the service rate of a machine can only be changed when a job is completed.

The shared processing multitasking model studied in this paper was first proposed by Hall et. al. in \cite{hll16}. In this model, machine may have reduced capacity for processing primary jobs in some periods where routine jobs are scheduled and share the processing with primary jobs. They studied this model in the single machine environment and assumed that the sharing   ratio   is a constant $e$ for all the shared intervals. For this model, it is easy to see that the makespan is the same for all schedules that have no unnecessary idle time. The authors in  \cite{hll16}  showed that the total completion time can be minimized by scheduling the jobs in non-decreasing order of the processing time, but  it  is unary NP-Hard for the objective function of total weighted completion time. When the primary jobs have different due dates, the authors    gave  polynomial time algorithms for  maximum lateness and the number of late jobs.

For our studied problems, if $e_{i,k} = 1$ for all time intervals, that is, there are no routine jobs, the problem becomes the classical parallel machine scheduling problem $P_m\mid \mid C_{max}$. For this problem,  Graham   studied the performance of List Scheduling  rule (\cite{g66})
and  Longest Processing Time (\cite{g69})  rule.
When the number of machines $m$ is fixed, Horowitz and Sahni \cite{hs76} developed a fast approximation scheme. Later,  Hochbaum and Shmoys \cite{hds87} designed a approximation scheme for this problem when $m$ is arbitrary.

On the other hand, if $e_{i,k} \in \{0, 1 \}$  for all time intervals, i.e. at any time the machine is either processing a primary job or a routine job but not both,  then our problem reduces to the problem of parallel machine scheduling with availability constraint.  This problem is also  NP-hard and
approximation  algorithms are developed
in  \cite{k98} and \cite{l91}.

For the general problem, $P_m\mid share \mid C_{max}$,  i.e., the sharing ratios  $e_{i,k}$ are arbitrary values in $ (0, 1]$, Fu, et al. \cite{fhz22} showed that there is  no approximation algorithm for the problem
unless $P=NP$. Then  they studied  $P_m\mid share,  e_{i,k} \ge e_0 \mid C_{max}$ and $P_m\mid share, e_{i,k} \ge e_0 \text{ for } i \le m_1 \mid C_{max}$, where $e_0$ is a constant. They analyzed the performance of some classical heuristics for the two problems. Finally, they  developed an approximation scheme for the problem $P_m\mid share,  e_{i,k} \ge e_0 \text{ for } i \le m_1 \mid C_{max}$.

All the above results are for the problems under the traditional  data model where all data can be stored in memory. There is no result for the studied problems under the data stream model where the input data is massive and cannot be read into memory.

While streaming algorithms have been studied in the field of statistics, optimization, and graph algorithms (see surveys by Muthukrishnan [38] and McGregor [37])  in the last twenty years, very little research is conducted in the area of scheduling and operations research in general.
 In \cite{bf12}, Beigel and Fu developed a randomized
streaming approximation scheme for the bin packing problem such that it needs only constant updating time and constant space, and outputs an $(1 + \epsilon)$-approximation in $(1/\epsilon)^{O(1/\epsilon)}$.
In \cite{cv21},  Cormode and Vesel{\'y}
designed a streaming asymptotic $(1+\epsilon)$-approximation algorithms for bin packing and $(d+\epsilon)$-approximation for vector bin packing in $d$ dimensions. For the related vector scheduling problem, they showed how to construct an input summary in space $\widetilde{O}(d^2 \cdot m/\epsilon^2)$ that preserves the optimum value up to a factor of $(2-1/m+\epsilon)$, where $m$ is the number of identical machines.

\subsection{New Contribution}

In this work, we develop the first steaming algorithms for the generalized multitasking shared processing scheduling problems. We assume the machine information is known beforehand, but all the job information including the number of jobs and the processing times are unknown until they are read.
Our streaming algorithms are approximation schemes. In one pass, for any positive constant $\epsilon$, the algorithms return a value that is at most $(1+\epsilon)$ times the optimal value of the makespan and it takes constant updating time to process each job in the stream.
If the jobs can be input in two passes, the algorithms can generate the schedule with constant processing time for each job in the stream.
We show that if an estimate of the maximum processing time can be obtained from priori knowledge, the approximation scheme can be implemented more efficiently. It should be noted that the approximation scheme given by Fu, et al. \cite{fhz22} does not work under the data stream model. So in this paper, we  develop a different approximation scheme.

\section{Streaming Algorithms}

In this section, we will present our streaming algorithms for the multitasking scheduling problem $P_m\mid \text{stream, share, } \  e_{i,k} \ge e_0 \text{ for }i\le m_1 \mid  C_{\max}$.

We first design an approximation scheme for the studied scheduling problem under the tradition model where can be stored in the memory. We then adapt this algorithm for the following three cases, respectively:
(1) the maximum job processing time, $p_{max}$, is given;
(2) an estimate of $p_{max}$ is given;
(3) no information about $p_{max}$ is given. In all these cases, the number of jobs is not known until all jobs are read.

\subsection{An Approximation Scheme under Traditional Data  Model}

In \cite{fhz22}, an approximation scheme has been developed for the studied scheduling problem under traditional data model,  where the main idea is to enumerate all  assignments for large jobs, prune the similar assignments, schedule the small jobs to all the obtained large job assignments, and finally pick the best schedule. This approximation scheme
requires storing all the jobs information and cannot be applied to the data stream model.

So in this subsection we develop a different approximation scheme for the traditional data model but can be adapted to work under the data stream model as well.
The idea of the approximation algorithm is to find the best assignment of large jobs and then generate a single schedule based on this large job assignment.

We first introduce a notation before describing our algorithm.  For any time $t$, we let $A_i(t)$ denote the total amount of processing time of the jobs that can be processed during $(0,t]$ on machine $M_i$. Formally, $A_i(t)={t}\cdot{e_{i,1}}$ if $t\in I_{i,1} = (0, t_{i,1}]$, and $A_i(t)=A_i(t_{i, k})+{(t-t_{i,k})}\cdot{e_{i, k+1}}$ if $t\in I_{i, k+1}=(t_{i,k}, t_{i, k+1}]$.
By the definition of $A_i(t)$, if $N_i$ is a set of jobs assigned to machine $M_i$ and $\sum_{j\in N_i} p_j\le A_i(t)$, then the jobs in $N_i$ can be completed before time $t$ on machine $M_i$.
Let $A(t)$ be the total amount of processing time of the jobs that can be processed during $(0,t]$ on all machines, i.e., $A(t)=\sum_{i=1}^m A_i(t)$.
Our algorithm is outlined as follows.

\medskip
\noindent {\bf Algorithm 1}

\smallskip
{\noindent \bf Input:}
 \begin{itemize}
         \item Parameters $m$, $m_1$,  $e_0$, and $\epsilon$  

          \item The intervals   $(0, t_{i,1}]$, $(t_{i,1}, t_{i,2}]$, $\ldots$ on machine $M_i$,  $1 \le i \le m$,
          and their sharing ratios $e_{i,1}$, $e_{i,2}$, $\ldots$
        \item The number of jobs $n$, and the jobs' processing time $p_j$, $1 \le j \le n$.
     \end{itemize}

{\noindent \bf Output:} A schedule of $n$ jobs

\medskip
{\noindent \bf Steps:}
\begin{enumerate}
 \item Identify the set of large jobs, $JS$, which is constructed below:

  \begin{enumerate}
     \item[(i)] Choose $\gamma_0$ and $N_0$ as follows:
\begin{eqnarray}
2^{\gamma_0 } &=& \ceiling{\frac{m+m_1-1}{\tfrac{\epsilon}{2} \cdot m_1\cdot e_0}} \label{select-gamma0-ineqn}\\
N_0 &=& \ceiling{\frac{m(m+m_1-1)}{\tfrac{\epsilon}{4} \cdot e_0 \cdot m_1}} \label{select-N0-ineqn}
\end{eqnarray}
     \item[(ii)] Let $p_{max}$ be the largest processing time of all the jobs. Define $\gamma_0+2$ continuous intervals, $IH_{-1}=(0, 2^{q_0}]$, $IH_0=(2^{q_0}, 2^{q_0+1}]$, $\ldots$,$IH_k=(2^{q_0+k}, 2^{q_0+k+1}]$, $\ldots$ , $IH_{\gamma_0}=(2^{q_0+\gamma_0}, 2^{q_0+\gamma_0+1}]$, such that $p_{max} \in IH_{\gamma_0}$, i.e., $2^{q_0+\gamma_0}<p_{max}\le 2^{q_0+\gamma_0+1}$.  Correspondingly, partition the jobs in $N = \{1, 2, \cdots, n\}$ into $\gamma_0 + 2$ groups, $H_{-1}$, $H_0$, $H_1$, \ldots, $H_{\gamma_0}$, based on their processing time:  if $p_j \in IH_k$, then   add job $j$  to $H_{k}$.
 \item[(iii)] Let $k_L$, $0 \le k_L \le \gamma_0$ be the largest index of the group that contains at least  $N_0$ jobs; 
 if no such group exists, i.e. all the groups have less than $N_0$ jobs, 
 we let $k_L = -1$.
Let $JS = \cup_{ k > k_L} H_k$ be the set of large jobs, and the remaining jobs are considered as small jobs.
  \end{enumerate}
 \item For each possible assignment of jobs in $JS$, find a time $t$ such that
   \begin{enumerate}
     \item[(i)]
     Both of the following conditions hold for $t$:  (a) $A(t) \ge \sum_{i=1}^n p_i$ and (b) for  $1 \le i \le m$, $A_i(t) \ge P^L_i$  where $P^L_i$ is the total processing time of the large jobs assigned on $M_i$.
     \item[(ii)] At least one of (a) and (b) doesn't hold when $t$ is replaced by ${\tfrac{t}{1+\epsilon/2}}$;
    \end{enumerate}

   \item Among all the assignments of jobs in $JS$, we pick the one that is associated  with the smallest $t$  and do the following:
      \begin{enumerate}
     \item[(i)] Assign the small jobs in any order to the machines so that
     at most  one job finishes at or after $t$ on each machine $M_i$

     \item[(ii)]  Remove the small jobs that finishes after $t$ on machine $M_i$, $i > m_1$ and schedule them to the first $m_1$ machines so that   there are
    at most $\ceiling{\tfrac{m-1}{m_1}}$ jobs finishing after $t$ on $M_i$, $i \le m_1$.
     \end{enumerate}
     \item  Return the obtained schedule.
\end{enumerate}

\begin{lemma}\label{approximation-scheme2-error-ratio-lemma}
  Algorithm~1 outputs a $(1+\epsilon)$-approximation for the scheduling problem $P_m| \text{share, } \  e_{i,k} \ge e_0 \text{ for } i \le m_1 |  C_{\max}$.
\end{lemma}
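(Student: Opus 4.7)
The strategy is to compare Algorithm~1's output against a fixed optimal schedule $S^*$ with makespan $t^*$ in three stages: (i) argue that the enumeration in step 2 sees the large-job assignment $\sigma^*$ induced by $S^*$, for which a time $t$ of at most $(1+\epsilon/2)\,t^*$ is returned; (ii) verify that the greedy placement of small jobs in step 3(i) is feasible and that the redistribution in step 3(ii) leaves at most $\ceiling{(m-1)/m_1}$ overflow small jobs on each of the first $m_1$ machines; and (iii) bound the extra real time contributed by those overflows by $(\epsilon/2)\,t^*$, which is where the parameter $N_0$ from (\ref{select-N0-ineqn}) is calibrated.

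For (i), let $P^L_i$ denote the total large-job processing time on $M_i$ under $\sigma^*$. Since every job in $S^*$ finishes by $t^*$, we have $A(t^*)\ge\sum_j p_j$ and $A_i(t^*)\ge P^L_i$ for each $i$, so $t^*$ itself satisfies conditions 2(i)(a)--(b) for $\sigma^*$. Because each $A_i$ is monotone non-decreasing in $t$, the set of $t$ for which both conditions hold under $\sigma^*$ is a half-line $[t_{\min},\infty)$ with $t_{\min}\le t^*$; the value $t_{\sigma^*}$ returned in step 2 satisfies $t_{\sigma^*}/(1+\epsilon/2)<t_{\min}\le t^*$, and therefore $t_{\sigma^*}\le(1+\epsilon/2)\,t^*$. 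Step 3 picks the smallest $t$ among all enumerated assignments, so $t^{\text{alg}}\le t_{\sigma^*}\le(1+\epsilon/2)\,t^*$. For (ii), the total residual $\sum_i(A_i(t^{\text{alg}})-P^L_i)=A(t^{\text{alg}})-\sum_i P^L_i$ is at least $\sum_{j\text{ small}}p_j$, so a greedy packing of small jobs into the residual capacities (using that each $A_i$ is piecewise linear and non-decreasing) leaves at most one small job per machine crossing $t^{\text{alg}}$, which is exactly the hypothesis used by step 3(ii).

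The main obstacle is (iii). If $k_L=-1$ there are no small jobs and the makespan equals $t^{\text{alg}}\le(1+\epsilon/2)\,t^*$, so assume $k_L\ge 0$. Every small job then has processing time at most $2^{q_0+k_L+1}$, while $H_{k_L}$ alone contains at least $N_0$ jobs of size strictly greater than $2^{q_0+k_L}$. Since the sharing ratios lie in $(0,1]$, no schedule processes more than $m$ units of work per unit time, so $m\,t^*\ge\sum_j p_j\ge N_0\cdot 2^{q_0+k_L}$, which yields the crucial bound $2^{q_0+k_L+1}<2m\,t^*/N_0$. On any $M_i$ with $i\le m_1$ the sharing ratio is at least $e_0$, so each overflow small job occupies at most $2m\,t^*/(N_0\,e_0)$ real time, and the total overflow time on such a machine is at most
\[
\ceiling{\tfrac{m-1}{m_1}}\cdot\frac{2m\,t^*}{N_0\,e_0}\;\le\;\frac{m+m_1-1}{m_1}\cdot\frac{2m\,t^*}{N_0\,e_0}.
\]
Substituting the value of $N_0$ from (\ref{select-N0-ineqn}) collapses the right-hand side to $(\epsilon/2)\,t^*$, so combining with (i) gives an overall makespan of at most $(1+\epsilon/2)\,t^*+(\epsilon/2)\,t^*=(1+\epsilon)\,t^*$. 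I expect the most delicate points to be justifying feasibility of step 3(i) using the piecewise-linear structure of the $A_i$, and verifying that the lower bound $m\,t^*\ge N_0\cdot 2^{q_0+k_L}$ is essentially the only place where the variable sharing ratios intervene.
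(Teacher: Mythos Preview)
Your argument for stages (i) and (ii) and the $k_L\ge 0$ branch of stage (iii) follows the paper's proof closely and is essentially correct. However, there is a genuine gap in your treatment of the case $k_L=-1$.

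You assert that ``if $k_L=-1$ there are no small jobs and the makespan equals $t^{\text{alg}}$.'' This is false. When $k_L=-1$, the large-job set is $JS=\bigcup_{k>-1}H_k=H_0\cup\cdots\cup H_{\gamma_0}$, so the small jobs are precisely the jobs in $H_{-1}$, and $H_{-1}$ need not be empty (it contains every job with $p_j\le 2^{q_0}$). Your overflow analysis in (iii) therefore does not cover this branch, and indeed your entire proof never invokes the parameter $\gamma_0$ from~(\ref{select-gamma0-ineqn}), which is a strong signal that something is missing: the algorithm calibrates \emph{two} parameters, and each controls one branch of the case split.

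The paper handles $k_L=-1$ as follows. Each overflow small job has processing time at most $2^{q_0+k_L+1}=2^{q_0}$, and since $p_{\max}\in(2^{q_0+\gamma_0},2^{q_0+\gamma_0+1}]$ one gets $2^{q_0}\le p_{\max}/2^{\gamma_0}\le C_{\max}^*/2^{\gamma_0}$. The overflow bound becomes
\[
\ceiling{\tfrac{m-1}{m_1}}\cdot\frac{1}{e_0}\cdot 2^{q_0}\;\le\;\frac{m+m_1-1}{m_1}\cdot\frac{1}{e_0}\cdot\frac{C_{\max}^*}{2^{\gamma_0}},
\]
and the choice of $\gamma_0$ in~(\ref{select-gamma0-ineqn}) makes this at most $(\epsilon/2)\,C_{\max}^*$. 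Once you add this second branch, your proof is complete and matches the paper's.
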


\begin{proof}
First let us look at the assignment of the jobs in $JS$ that is the same as the optimal schedule. In step 2 of the algorithm, we find the time $t^*$ associated with this large job assignment such that both (a) and (b) hold for $t^*$, but either (a) or (b) doesn't hold when $t^*$ is replaced by ${\tfrac{t^*}{1+\epsilon/2}}$, which means that $\tfrac{t^*}{1+ \epsilon/2} < C_{max}^*$, i.e., $ t^* \le (1  + \tfrac{\epsilon}{2})C_{max}^* $. In step 3 of the algorithm, the large job assignment associated  with the  smallest $t$  is selected, and we have $ t \le t^* \le  (1  + \tfrac{\epsilon}{2})C_{max}^* $.

For the selected large job assignment in step 3 of the algorithm, by condition (b), all large jobs are finished before or at $t$. By condition (a), $A(t) \ge \sum_{i=1}^n p_i$, there must be at most $(m-1)$ small jobs that finish after $t$  in step 3(i), and thus in step 3(ii) we can distribute them  onto the first $m_1$ machines so that at most $\ceiling{\tfrac{m-1}{m_1}}$ small jobs on each of these machines finish after $t$. Hence, the job with the largest completion time must be on one of the first $m_1$ machines.

Let $j$ be the job such that $C_j = C_{max}$, and assume $j$ is scheduled on machine $M_i$, $i \le m_1$. Let $U$ 
be the set of jobs on $M_i$ that finish after $t$. As the sharing ratio is at least $e_0$ on $M_i$,
 $C_j  \le    t + \sum_{ u \in U}  \tfrac{p_u}{e_0}$.
 Note that  all jobs in $U$ are small jobs, i.e.
   $p_u \le 2^{q_0+k_L+1}$  for each job $u \in U$.
From the analysis above, $|U| \le \ceiling{\tfrac{m-1}{m_1}}$.        So,
 \begin{equation} C_{max} = C_j  \le   t+  \ceiling{\frac{m-1}{m_1}} \cdot  \frac{1}{e_0} \cdot 2^{q_0+k_L+1}. \label{algorithm2-makespan-bound}
 \end{equation}

If $k_L > -1$,    there are at least $N_0$ jobs in $H_{k_L}$, and each of which has a processing time in the range of $(2^{q_0+k_L}, 2^{q_0+k_L+1}]$. Therefore, we have $C_{max}^* \ge  \tfrac{ N_0 \cdot 2^{q_0 + k_L}}{m}$, which implies
 $2^{q_0+k_L+1} \le \tfrac{2m C_{max}^*}{N_0}$. Thus, we have
\begin{eqnarray*}
 C_{max} = C_j  & \le  &  t+  \ceiling{\frac{m-1}{m_1}} \cdot  \frac{1}{e_0} \cdot 2^{q_0+k_L+1} \\ 
 & \le & t +   \ceiling{\frac{m-1}{m_1}} \cdot \frac{1}{e_0} \cdot \frac{2 m C_{max}^* }{N_0}\\
  & \le & t + \frac{m+m_1-1}{m_1} \cdot \frac{1}{e_0} \cdot \frac{2 m C_{max}^* }{N_0} \\
 & \le & \left(1+\frac{\epsilon}{2}\right) C_{max}^* + \frac{m+m_1-1}{m_1} \cdot \frac{1}{e_0} \cdot \frac{2 m C_{max}^* }{N_0}  \text{ \hspace{0.1in}  by  Equation  } (\ref{select-N0-ineqn})\\
&    \le & (1 + \epsilon) C_{max}^* \enspace.
\end{eqnarray*}
Otherwise, $k_L = -1$,
using $2^{q_0+\gamma_0}<p_{max}\le 2^{q_0+\gamma_0+1}$,
we have
 $2^{q_0+k_L+1} = 2^{q_0}  \le \tfrac{ p_{max}}{2^{\gamma_0}} \le \tfrac{ C_{max}^*}{2^{\gamma_0}}$. Therefore, we can get
    \begin{eqnarray*}
     C_{max} = C_j & \le    & t + \ceiling{\frac{m-1}{m_1}}  \cdot \frac{1}{e_0} \cdot 2^{q_0+k_L+1} \\
     & =   & t + \ceiling{\frac{m-1}{m_1}}  \cdot \frac{1}{e_0} \cdot 2^{q_0} \\
 & \le & t + \ceiling{\frac{m-1}{m_1}}  \cdot \frac{1}{e_0} \cdot \frac{C_{max}^*}{2^{\gamma_0}} \\
  &  \le & t + \frac{m+m_1-1}{m_1}  \cdot \frac{1}{e_0} \cdot \frac{C_{max}^*}{2^{\gamma_0}} \\
  &  \le & \left(1+\frac{\epsilon}{2}\right) C_{max}^* + \frac{m+m_1-1}{m_1}  \cdot \frac{1}{e_0} \cdot \frac{C_{max}^*}{2^{\gamma_0}}  \text{ \hspace{0.1in}  by  Equation  }   (\ref{select-gamma0-ineqn}) \\
   & \le &  (1 + \epsilon) C_{max}^* \enspace.
    \end{eqnarray*}

So in both cases, we have $C_{max} \le (1+ \epsilon)C_{max}^*$.
\end{proof}

Now we analyze the running time of Algorithm~1.
{\noindent}Let
\begin{eqnarray}
t_1(m,m_1,\tilde{n},\epsilon, e_0)) =   \left( m^{O(\tfrac{m^2}{\epsilon e_0 m_1} \log\tfrac{m}{\epsilon e_0 m_1})}\left( \log (\tfrac{1}{\epsilon}\log \tfrac{m}{ e_0}) \cdot \sum_{1 \le i \le m} \log k_i  \right)\right) \enspace,\label{t2-eqn}
\end{eqnarray}
then we have the following lemma for the running time of Algorithm~1.
\begin{lemma}\label{approximation-scheme2-time-lemma} Let  $\epsilon$ be a real number in $(0,1)$,
Algorithm~1 runs in time $$O\left(n + \tilde{n}  + t_1(m,m_1,\tilde{n},\epsilon, e_0))
\right),$$
which is linear $O(n + \tilde{n})$ when $m$ is constant.
\end{lemma}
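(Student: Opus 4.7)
The plan is to sum the cost of each step of Algorithm~1, isolating the Step~2 enumeration as the dominant term. Step~1 reads the $n$ jobs and $\tilde{n}$ interval descriptors to compute $p_{\max}$, assigns each job to its group $H_k$ via a single $\lfloor \log_2 \rfloor$, and identifies $k_L$ by a pass over the groups; Step~3 performs one greedy pass over the small jobs and redistributes at most $m-1$ overflows. Together these contribute $O(n + \tilde{n})$.

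For Step~2, I first bound $|JS|$. By the choice of $k_L$, every $H_k$ with $k > k_L$ contains fewer than $N_0$ jobs, and $k$ ranges over at most $\gamma_0 + 1$ values, so $|JS| < (\gamma_0 + 1) N_0$. The defining equations (\ref{select-gamma0-ineqn}) and (\ref{select-N0-ineqn}) give $\gamma_0 = O(\log(m/(\epsilon e_0 m_1)))$ and $N_0 = O(m^2/(\epsilon e_0 m_1))$, so the number of $JS$-to-machine assignments enumerated is at most
\[
m^{|JS|} \;=\; m^{O((m^2/(\epsilon e_0 m_1))\log(m/(\epsilon e_0 m_1)))},
\]
matching the first factor of $t_1$.

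For each fixed assignment I locate $t$ by binary search on a $(1+\epsilon/2)$-geometric grid. A valid $t$ lies between a lower bound $L = \max(p_{\max}, \sum_j p_j / m)$ (since $C_{\max}^* \ge L$ because every machine runs at rate at most $1$) and an upper bound $U = \sum_j p_j / (m_1 e_0)$ (obtained by placing all jobs on the first $m_1$ machines, whose rate is at least $e_0$), giving $U/L = O(m/e_0)$ independently of $n$. The grid therefore has $O((1/\epsilon)\log(m/e_0))$ points, binary searched in $O(\log((1/\epsilon)\log(m/e_0)))$ probes; each probe evaluates $A_i(t)$ and $A(t) = \sum_i A_i(t)$ by an $O(\log k_i)$ binary search on $M_i$'s sorted interval endpoints, costing $O(\sum_i \log k_i)$ per probe. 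Because the geometric step is exactly $1+\epsilon/2$, the smallest grid point satisfying (a) and (b) automatically fulfills condition (ii). This produces the second factor of $t_1$. Combining yields $O(n + \tilde{n} + t_1)$; when $m$ is constant $m^{|JS|}$ is constant and $\sum_i \log k_i = O(\log \tilde{n})$, so the bound collapses to $O(n + \tilde{n})$. The main obstacle is selecting $L$ and $U$ so that $U/L$ is $\mathrm{poly}(m/e_0)$ rather than polynomial in $n$; otherwise the $\log((1/\epsilon)\log(m/e_0))$ factor in $t_1$ would not match.
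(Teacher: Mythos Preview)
Your proposal follows essentially the same route as the paper: bound $|JS|\le(\gamma_0+1)N_0$, enumerate $m^{|JS|}$ assignments, and for each one binary-search a $(1+\epsilon/2)$-geometric grid whose probes cost $O(\sum_i\log k_i)$ via interval binary search. The paper also notes an $O(\tilde n)$ preprocessing step to tabulate $A_i(t_{i,k})$ at all interval endpoints, without which an $O(\log k_i)$ query for $A_i(t)$ is not possible; you implicitly rely on this but do not state it.

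There is one genuine slip. Your upper bound $U=\sum_j p_j/(m_1 e_0)$ is \emph{not} always an upper bound on the search range: take $m=m_1=2$, $e_0=1/2$, and a single job $p_1=1$; then $C_{\max}^*=2$ while your $U=1$. The justification ``place all jobs on the first $m_1$ machines'' ignores job indivisibility. The paper instead takes $UB=P/e_0$, obtained by scheduling everything on one machine with rate $\ge e_0$, which is always $\ge C_{\max}^*$; with $LB=P/m$ this still gives $UB/LB=m/e_0$ and hence the same $O\bigl((1/\epsilon)\log(m/e_0)\bigr)$ grid size. Replacing your $U$ by $P/e_0$ (or by $\max(p_{\max},P/m_1)/e_0$) repairs the argument without affecting the final bound.
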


\begin{proof}
In Step 1, we find the set of large jobs $JS$ which can be done in $O(n)$ time.
The number of jobs in $JS$ is at most $N_0 (\gamma_0 +1)$.

In step 2, we consider all possible assignments of jobs in $JS$, and there are  $O(m^{O(N_0\gamma_0)})$ of them.
By Equation~(\ref{select-N0-ineqn}), we have
$N_0=O\left(\tfrac{m(m +m_1-1) }{\epsilon e_0 \cdot m_1}\right) = O\left(\tfrac{m^2}{\epsilon e_0 \cdot m_1}\right) $. By Equation~
(\ref{select-gamma0-ineqn}), we have $\gamma_0=O\left(\log{\tfrac{m+m_1-1}{
        \epsilon e_0 \cdot m_1} }\right) =O\left(\log{\tfrac{m}{
        \epsilon e_0 \cdot m_1} }\right).$

Let $P=\sum_{i=1}^n p_i$, then the time $t$ associated with each assignment of jobs in $JS$ is between the lower bound   $LB= P/m$   and the upper bound  $ UB = \tfrac{P}{e_0} = LB \cdot \tfrac{m}{e_0}$. It takes $O(\log (UB-LB))$ time to search the exact smallest $t$ such that both (a) and (b) hold. To speed up the algorithm, instead of searching the exact smallest $t$, we search $t$ with a $(1+\epsilon/2)$-approximation
in step 2(ii).
Specifically, we only need to consider those time points whose values are  $LB\times (1+\tfrac{\epsilon}{2})^x$,  where  $0 \le x \le \log_{1+\epsilon/2} \tfrac{UB}{LB}
= O\left(\tfrac{1}{\epsilon}\log \tfrac{m}{ e_0}  \right)$.
In this way, we can use binary search  to find the corresponding $x$ 
in $O\left( \log (\tfrac{1}{\epsilon}\log \tfrac{m}{ e_0})  \right)$ iterations.

In each iteration of binary search, for the specific $t=LB \cdot (1+\tfrac{\epsilon}{2})^x$, we need to calculate $A_i(t)$, $1\le i \le m$, which is the total amount
of jobs that can be processed by $t$ on machine $M_i$.
To do so, we use binary search to find the interval  $(t_{i,k-1}, t_{i,{k}}]$ such that  $t \in (t_{i,k-1}, t_{i,{k}}]$ in $O(\log  k_i)$ time, then compute
$A_i(t) = A_i(t_{i,k-1}) + (t - t_{i,k-1}) e_{i,k}$, which can be  done in constant time if  $A_i(t_{i,k})$ is known; indeed, we  can pre-calculated $A_i(t_{i,k})$ for all $i$ and $t_{i,k}$ in $O(\tilde{n})$ time. Once  $A_i(t)$ is calculated  for all $i$,    $1 \le i \le m$, we have $A(t) = \sum A_i(t)$. In total,  it takes $O(\sum_{i=1}^{m} \log k_i)$ time to calculate $A_i(t)$ and $A(t)$, and check conditions (a) and (b) for $t$. With the same running time, one can calculate $A_i({\tfrac{t}{1+\epsilon/2}})$ and check conditions (a) and (b) for ${\tfrac{t}{1+\epsilon/2}}$.
%
Therefore, the time of finding $t$ associated with a specific large job assignment is
$\left( \log (\tfrac{1}{\epsilon}\log \tfrac{m}{ e_0}) \cdot \sum_{1 \le i \le m} \log k_i  \right)$. The total time for all assignments
would be
$t_1(m,m_1,\tilde{n},\epsilon, e_0)$, which is given by Equation (\ref{t2-eqn}).

In step 3,  we select the assignment associated  with the smallest $t$ and schedule the small jobs based on this assignment. This can be done in time $O(n + \tilde{n})$.

Adding the time in all steps, we get the total time as stated in the lemma.
\end{proof}

By Lemmas~\ref{approximation-scheme2-error-ratio-lemma} and ~\ref{approximation-scheme2-time-lemma}, we have the following theorem.

\begin{theorem}\label{approximation-scheme-theorem}
Let  $\epsilon$ be a real number in $(0,1)$,
and $m$ be a constant.
Then there is a $(1+\epsilon)$-approximation scheme for  $P_m| \text{ share} \  e_{i,k} \ge e_0 \text{ for } i \le m_1|  C_{\max}$ in time   $O(n+ \tilde{n}) $.
\end{theorem}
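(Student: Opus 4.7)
The plan is to deduce the theorem as a direct corollary of Lemma~\ref{approximation-scheme2-error-ratio-lemma} and Lemma~\ref{approximation-scheme2-time-lemma}. The correctness part (the approximation ratio $1+\epsilon$) is already handed to us verbatim by Lemma~\ref{approximation-scheme2-error-ratio-lemma}, so nothing new is required there. The only real content is to verify that, under the hypothesis that $m$ is constant (and hence $m_1$, $e_0$, and $\epsilon$ are regarded as fixed constants as well), the running-time bound from Lemma~\ref{approximation-scheme2-time-lemma} collapses to $O(n+\tilde{n})$.

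To see this, I would simply inspect Equation~(\ref{t2-eqn}). With $m$, $m_1$, $e_0$, and $\epsilon$ all constant, the exponent $\tfrac{m^2}{\epsilon e_0 m_1}\log \tfrac{m}{\epsilon e_0 m_1}$ is a constant, so the factor $m^{O(\cdot)}$ in $t_1(m,m_1,\tilde{n},\epsilon,e_0)$ is itself a constant depending only on $m$, $m_1$, $e_0$, $\epsilon$. Likewise $\log(\tfrac{1}{\epsilon}\log\tfrac{m}{e_0})$ is a constant. The only term in $t_1$ that depends on the input size is $\sum_{1\le i\le m}\log k_i$, and since each $k_i \le \tilde{n}$ and there are $m=O(1)$ machines, we have $\sum_{i}\log k_i = O(\log \tilde{n}) = O(\tilde{n})$. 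Therefore $t_1(m,m_1,\tilde{n},\epsilon,e_0) = O(\tilde{n})$, which is absorbed into the $O(n+\tilde{n})$ term already present in the running-time bound of Lemma~\ref{approximation-scheme2-time-lemma}.

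Combining the two observations yields an algorithm that produces a schedule whose makespan is at most $(1+\epsilon)$ times the optimum, in total time $O(n+\tilde{n})$, which is precisely the statement of the theorem. There is no real obstacle here; this is a routine packaging step. The only thing one has to be slightly careful about is making explicit that ``$m$ constant'' is interpreted in the standard sense that $m_1$, $e_0$, and $\epsilon$ are treated as constants hidden in the $O(\cdot)$ notation, which is the convention already used implicitly in Lemma~\ref{approximation-scheme2-time-lemma} when it remarks that the bound is linear ``when $m$ is constant.''
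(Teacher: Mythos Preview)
Your proposal is correct and follows exactly the paper's approach: the theorem is stated immediately after Lemmas~\ref{approximation-scheme2-error-ratio-lemma} and~\ref{approximation-scheme2-time-lemma} with the one-line justification ``By Lemmas~\ref{approximation-scheme2-error-ratio-lemma} and~\ref{approximation-scheme2-time-lemma}, we have the following theorem.'' Your explicit verification that $t_1(m,m_1,\tilde{n},\epsilon,e_0)=O(\tilde{n})$ when $m$ is constant is more detailed than what the paper provides (the paper simply asserts in Lemma~\ref{approximation-scheme2-time-lemma} that the bound ``is linear $O(n+\tilde{n})$ when $m$ is constant''), but it is consistent with and justifies that assertion.
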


Now in the following we will adapt Algorithm~1  so it works under the streaming model where the processing times of the jobs are given as a stream.
In Algorithm~1, we classify jobs as large or small and then process them separately. Whether a job is large or small is determined by the parameters   $\gamma_0$,  $N_0$ and $q_0$ which in turn are determined by  $m$, $m_1$, $e_0$,  $\epsilon$ and $p_{max}$. While   $m$, $m_1$, $e_0$ and $\epsilon$ are  parts of the input,  $p_{max}$ may or may not be. We will first give the streaming algorithm when $p_{max}$ is given as an input.

\subsection{Streaming Algorithm When $p_{max}$ is Given}
Given $P_{max}$ as part of the input, we can  partition the jobs into groups and classify a job as large or  small as in Algorithm~1, but the challenge is that we can not store the processing times of all jobs.
For each group $H_k$, $-1 \le k \le \gamma_0$, we maintain a triple $(n_k, P_k, JS_k)$, where  $n_k$ is the number of jobs in $H_k$, $P_k$ is the total processing time of jobs in $H_k$, 
and $JS_k$ contains the set of jobs in $H_k$ if $n_k<N_0$ and $k \ge 0$; otherwise, $JS_k$ is an empty set. That is, we keep the processing times of the potential large jobs only. And thus we only keep the processing times of at most $N_0(\gamma_0+1)$ large jobs. We update the triples for the groups   as jobs are scanned one by one. Once all the jobs are input, we have the complete information of large jobs and can use step 2 of Algorithm~1 to find the assignment of large jobs associated with the smallest $t$.  Since we are concerned with the approximate value of the optimal makespan, we don't need to schedule the small jobs as in  step 3 of Algorithm 1. Instead we can directly return the approximate value of the optimal makespan as $t+\ceiling{\tfrac{m-1}{m_1}} \cdot \tfrac{1}{e_0} \cdot 2^{q_0+k_L+1}$. The algorithm can be described as follows.

\medskip
\noindent {\bf Algorithm 2}

\smallskip
{\noindent \bf Input:}
 \begin{itemize}
         \item Parameters $m$,    $m_1$,  $e_0$, and $\epsilon$ 

          \item The intervals   $(0, t_{i,1}]$, $(t_{i,1}, t_{i,2}]$, $\ldots$ on $M_i$
          and their sharing ratios $e_{i,1}$, $e_{i,2}$, $\ldots$, respectively ($1 \le i \le m$)
        \item the maximum processing time $p_{max}$
        \item The jobs' processing time $p_j$ (stream input), $1 \le j \le n$.
     \end{itemize}
{\noindent\bf Output:} An approximate value of the optimal makespan.

  \medskip
   {\noindent \bf Steps:}
\begin{enumerate}
 \item Identify the set of large jobs, $JS$.
  \begin{enumerate}
     \item[a.] Choose $\gamma_0$, $N_0$, $q_0$ as in Algorithm 1.

     \item[b.] Read jobs one by one and do the following:

     \begin{enumerate}
    \item if $p_j \le 2^{q_0}$
    \item[] \ \ $k = -1$
    \item[] else
    \item[] \ \ $k = \ceiling{\log_2 p_j} - q_0 -1 $
    \item  $P_{k} = P_k + p_j$
    \item $n_k = n_k +1$
    \item if $k = -1$ or $n_k \ge N_0$
    \item[] \ \  reset $JS_k = \emptyset$
    \item[] else
    \item[] \ \  $JS_k = JS_k \cup \{j\}$
    \end{enumerate}

 \item[c.] $P = \sum_{-1 \le k \le \gamma_0} P_k$
 \item[d.] Let $k_L$, $0 \le k_L \le \gamma_0$, be the largest index of the group such that $n_k \ge N_0$; if $k_L$ doesn't exist, let $K_L = -1$.  Let $JS = \cup_{ k > k_L} JS_k$.
  \end{enumerate}
 \item For each possible assignment of jobs in $JS$, find time $t$ associated with the assignment as in Algorithm 1

   \item Find the  smallest $t$ from the above step, and return the value $t+\ceiling{\tfrac{m-1}{m_1}} \cdot \tfrac{1}{e_0} \cdot 2^{q_0+k_L+1}$.
\end{enumerate}

\begin{theorem}\label{streaming-alg-pmax} Let $\epsilon$ be a real number in $(0,1)$, assume that $p_{max}$ is a given input, then Algorithm 2 is a one-pass streaming algorithm for  $P_m\mid \text{stream, share, } \  e_{i,k} \ge e_0$ for $i\le m_1 \mid  C_{\max}$ and it takes
    \begin{enumerate}
        \item  $O(1)$ time for processing each job in the stream,
        \item   
        $O\left( \tilde{n} + {\frac{m^2}{\epsilon e_0 m_1}}\cdot {\log{\frac{m}{\epsilon e_0}} }\right)$ space, and
        \item
        $O(\tilde{n} + t_1(m,m_1, \tilde{n}, \epsilon, e_0))$ time 
    \end{enumerate}
to find an approximate value of the optimal makespan 
by a $(1+\epsilon)$ factor.
\end{theorem}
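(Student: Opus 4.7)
The plan is to reduce correctness to Lemma~\ref{approximation-scheme2-error-ratio-lemma} by arguing that Algorithm~2 produces exactly the same set of large jobs and the same selected time $t$ as Algorithm~1 would on the full input, and then to account separately for the per-job update cost, the working space, and the post-stream computation time.

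First I would verify correctness. The parameters $\gamma_0,N_0,q_0$ depend only on $m,m_1,e_0,\epsilon$ and $p_{max}$, all of which are available to Algorithm~2, so they are identical to those used in Algorithm~1. The formula $k=\lceil\log_2 p_j\rceil-q_0-1$ places a job with $p_j>2^{q_0}$ into the unique index $k\ge 0$ with $p_j\in IH_k=(2^{q_0+k},2^{q_0+k+1}]$, while jobs with $p_j\le 2^{q_0}$ go to $H_{-1}$; hence $n_k$ and $P_k$ are the true count and processing-time sum of $H_k$ after the pass. Because $JS_k$ is retained only while $n_k<N_0$ and is emptied once the group is ``full,'' the final $JS=\bigcup_{k>k_L}JS_k$ coincides with the large-job set of Algorithm~1, and $k_L$ is the same index. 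Step~2 of Algorithm~2 is verbatim the same as in Algorithm~1, so the smallest associated $t$ is the same. By the inequality~(\ref{algorithm2-makespan-bound}) and its refinement in the proof of Lemma~\ref{approximation-scheme2-error-ratio-lemma}, the returned quantity $t+\lceil(m-1)/m_1\rceil\cdot(1/e_0)\cdot 2^{q_0+k_L+1}$ lies between $C_{max}^{*}$ and $(1+\epsilon)C_{max}^{*}$, yielding the approximation guarantee.

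Next I would bound the per-job update time. For each incoming $p_j$, computing $k$ is $O(1)$ in the word-RAM model (one comparison, one $\lceil\log_2\cdot\rceil$), and updating $n_k,P_k$ is $O(1)$. The operation on $JS_k$ is either an append or a ``clear,'' and the only subtle point is that clearing the list once $n_k$ first reaches $N_0$ costs $O(N_0)$. I would handle this by a lazy representation: once $n_k\ge N_0$, we simply mark $JS_k$ as overflowed and skip future insertions, so the clearing happens at most once per group, and its cost is amortized over the $N_0$ insertions that triggered it. This gives $O(1)$ amortized time per job, and since the groups can be pre-allocated once $\gamma_0$ is known, no reorganization is needed as new jobs arrive.

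For space, we store $\gamma_0+2$ triples, each holding two scalars and at most $N_0$ processing times. Substituting $\gamma_0=O(\log\tfrac{m}{\epsilon e_0 m_1})=O(\log\tfrac{m}{\epsilon e_0})$ and $N_0=O(\tfrac{m^2}{\epsilon e_0 m_1})$ gives working space $O(\tfrac{m^2}{\epsilon e_0 m_1}\cdot\log\tfrac{m}{\epsilon e_0})$, plus $O(\tilde n)$ for the sorted interval data and the precomputed prefix sums $A_i(t_{i,k})$, matching the stated bound. For the post-stream running time, precomputing $A_i(t_{i,k})$ costs $O(\tilde n)$, Step~2 reuses the enumeration-plus-binary-search procedure analyzed in Lemma~\ref{approximation-scheme2-time-lemma} and runs in $t_1(m,m_1,\tilde n,\epsilon,e_0)$, and Step~3 is a single arithmetic expression. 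The total is $O(\tilde n+t_1(m,m_1,\tilde n,\epsilon,e_0))$. The only genuine obstacle in this proof is the amortized $O(1)$ claim for the $JS_k$ update; everything else is a direct transfer of Lemmas~\ref{approximation-scheme2-error-ratio-lemma} and~\ref{approximation-scheme2-time-lemma} once the equivalence of the large-job sets is established.
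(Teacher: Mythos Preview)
Your proposal is correct and follows essentially the same approach as the paper: establish that Algorithm~2 computes the same $k_L$, $JS$, and minimum $t$ as Algorithm~1, then invoke Lemma~\ref{approximation-scheme2-error-ratio-lemma} for the approximation guarantee and Lemma~\ref{approximation-scheme2-time-lemma} for the post-stream running time, while reading off the space bound from $N_0\gamma_0$. You are actually more careful than the paper in flagging the amortization needed for the ``reset $JS_k$'' step; one small overreach is the assertion that the returned value is at least $C_{max}^{*}$, which neither Lemma~\ref{approximation-scheme2-error-ratio-lemma} nor the paper's proof establishes (only the upper bound $(1+\epsilon)C_{max}^{*}$ is shown).
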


\begin{proof} We first show that the returned value is an approximate value of $C_{max}^*$ by a $(1+\epsilon)$ factor.  Since we  keep all the processing times of large jobs, the selected large job assignment and the associated $t$ obtained in Algorithm~2 are exactly the same as Algorithm~1. Following the same argument in Lemma~\ref{approximation-scheme2-error-ratio-lemma},
we can directly return  $(t + \ceiling{\tfrac{m-1}{m_1}} \cdot \tfrac{1}{e_0} \cdot 2^{q_0+k_L+1})$ as an approximate value of the optimal makespan,  which is at most $(1+\epsilon)C_{max}^*$. 

The storage used for the streaming algorithm is mainly $O(\gamma_0)$ triples including at most $O(N_0\gamma_0)$ jobs. The total storage for these triples is
$O(N_0\gamma_0)=O\left({\frac{m^2}{\epsilon e_0 m_1}}\cdot {\log{\frac{m}{\epsilon e_0}} }\right)$.
In addition, we need to store $O(\tilde{n})$ processing sharing intervals.

As seen in step 1b, each job is processed in $O(1)$ time.
After step 1, we get the set of larges jobs, $JS$, including at most $O(N_0 \gamma_0)$ jobs.
After $A(t_{i,k})$ and $A_{i}(t_{i,k})$ are  pre-calculated in $O(\tilde{n})$, the time for step 2 is the same as in Theorem~\ref{approximation-scheme-theorem}, $t_1(m, m_1, \epsilon, e_0,n)$.
\end{proof}

If the jobs can be read  in a second pass, we can return a schedule of all jobs whose makespan is at most $(1 + \epsilon)C_{max}^*$.
Specifically, in the first pass,  we   store the assignment of jobs in $JS$ associated with the minimum $t$ obtained from step 2   as well as the total processing time, $P_i$, of jobs in $JS$ that are assigned to machine $M_i$, $1 \le i \le m$.
In the second pass, we only need to schedule the small jobs. When a job is read, 
if it is a  job in $JS$, 
we don't need to do anything; otherwise
we schedule it to the machine so that at most $\ceiling{\tfrac{m-1}{m_1}}$ small jobs finishing after $t$.

\begin{theorem}
There is a two-pass $(1+\epsilon)$-approximation streaming algorithm for  $P_m\mid \text{stream, share, } \ e_{i,k} \ge e_0$ for $i\le m_1 \mid  C_{\max}$ such that it takes
    \begin{enumerate}
        \item 
        $O(1)$ time to process each job  in both the first pass and the second pass,

        \item   $O\left( \tilde{n} + {\frac{m^2}{\epsilon e_0 m_1}}\cdot {\log{\frac{m}{\epsilon e_0}} }\right)$ space, and
        \item
        $O(\tilde{n} + t_1(m, m_1,\epsilon))$ 
        time
    \end{enumerate}
to return a schedule with a $(1+\epsilon)$ approximation.
\end{theorem}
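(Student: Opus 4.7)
The plan is to extend the one-pass Algorithm 2 (whose approximation guarantee was established in Theorem \ref{streaming-alg-pmax}) by running the identical first pass and then carrying out a second streaming pass that actually assigns the small jobs. Correctness will reduce to observing that the resulting schedule is the one whose makespan was already bounded in Lemma \ref{approximation-scheme2-error-ratio-lemma}; the new work is purely algorithmic, namely to show that the second-pass assignment can be done in $O(1)$ time per job within the claimed space.

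In pass one I run Algorithm 2 verbatim, but additionally record the large-job assignment associated with the minimum $t$, the stream positions of the jobs of $JS$ together with their machines in a hash table that supports $O(1)$ lookup, and for every machine $M_i$ the large-job load $P^L_i$. These data structures take $O(N_0 \gamma_0 + m)$ extra space, absorbed by the bound of Theorem \ref{streaming-alg-pmax}, and do not change the $O(1)$ per-job update. In pass two, for each arriving job $j$ (identified by its stream position) I first test membership in the hash table; if $j \in JS$ I commit it to the stored machine and move on. Otherwise $j$ is small and is placed following step 3 of Algorithm 1: keep a per-machine running load $L_i$ initialized at $P^L_i$, append $j$ to the current $M_i$ as long as $L_i + p_j \le A_i(t)$, and move to $M_{i+1}$ once $M_i$ first crosses $A_i(t)$. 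The at most $m-1$ overflow small jobs (guaranteed to exist in that quantity by condition (a) on $t$) are distributed round-robin onto the first $m_1$ machines so that each such machine receives at most $\ceiling{\tfrac{m-1}{m_1}}$ overflow jobs. Every decision consults a constant amount of state, so the per-job cost in pass two is $O(1)$.

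Correctness is immediate since the output schedule is one that step 3 of Algorithm 1 could have produced from the stored large-job assignment, and therefore Lemma \ref{approximation-scheme2-error-ratio-lemma} applies verbatim to give a makespan of at most $(1+\epsilon)C_{\max}^*$. The space is dominated by that of Theorem \ref{streaming-alg-pmax}, and the total running time is the first-pass cost $O(\tilde{n} + t_1(m,m_1,\tilde{n},\epsilon,e_0))$ plus $O(n + \tilde{n})$ for the second pass, which is absorbed. The main subtlety I foresee is the $O(1)$ membership test for $JS$: because processing times may repeat, $JS$ must be keyed by the stream position assigned during pass one, and the overflow distribution in step 3(ii) must be carried out deterministically in pass two so that the user-visible schedule coincides with the one whose makespan is bounded. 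Once this bookkeeping is pinned down, the remaining arguments are routine adaptations of the proofs of Lemma \ref{approximation-scheme2-error-ratio-lemma} and Theorem \ref{streaming-alg-pmax}.
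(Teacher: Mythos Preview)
Your proposal is correct and follows essentially the same approach as the paper. The paper itself does not give a formal proof of this theorem; it is preceded only by a short paragraph explaining that in the first pass one stores the large-job assignment attaining the minimum $t$ together with the per-machine large-job loads, and that in the second pass one schedules each arriving small job so that at most $\ceiling{\tfrac{m-1}{m_1}}$ small jobs finish after $t$ on the first $m_1$ machines. Your write-up supplies the implementation details (hash-table lookup keyed by stream position, running loads $L_i$ initialized at $P^L_i$, round-robin overflow) that the paper leaves implicit, and your correctness reduction to Lemma~\ref{approximation-scheme2-error-ratio-lemma} is exactly what the paper intends.
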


\subsection{Streaming Algorithm When an Estimate of $P_{max}$ is Given}
Algorithm 2 works when $p_{max}$ is an input. In reality, however, $p_{max}$ may not be obtained accurately without scanning all the jobs.
In many practical scenarios, however, the estimate of $p_{max}$ could be obtained based on priori knowledge. If this is the case, we can modify Algorithm 2 to get the approximate value of the optimal makespan as described below.

Let us assume that we are given  $p_{max}^E$,  an estimate of $p_{max}$,  such that $p_{max} \le p_{max}^E \le \alpha*p_{max}$. And with this input, if we do the same as in Algorithm 2, we would partition the processing time range $(0, p_{max}^E]$ into $\gamma_0+2$ continuous intervals such as $IH_{-1}=(0, 2^{q_0'}]$, $IH_0=(2^{q_0'}, 2^{q_0'+1}]$ $\cdots$, $IH_{\gamma_0}=(2^{q_0'+\gamma_0},2^{q_0'+\gamma_0+1}]$, where $q_0'=\ceiling{\log p_{max}^E}-\gamma_0-1$. Comparing with the intervals obtained with the exact $p_{max}$, we have $q_0 = \ceiling{\log p_{max}} - \gamma_0 -1 \ge \ceiling{\log \tfrac{p_{max}^E}{\alpha}} - \gamma_0 -1 \ge q_0' - \ceiling{\log \alpha}$, so we need to further split $IH_{-1}$ into $\ceiling{\log \alpha}+1$ intervals such as $IH_{-1}=(2^{q_0'-1}, 2^{q_0'}]$, $IH_{-2}=(2^{q_0'-2}, 2^{q_0'-1}]$, $\cdots$, $IH_{-1-\ceiling{\log \alpha}}=(0, 2^{q_0'-\ceiling{\log \alpha}}]$. Correspondingly, we have the job groups $H_{-1-\ceiling{\log \alpha}}$, $\cdots$, $H_{\gamma_0}$. When we scan the jobs one by one, we can add the job to the corresponding group based on its processing time as we did in Algorithm 2. After all the jobs are scanned, we can get the exact $p_{max}$ and only keep $\gamma_0+2$ groups as in Algorithm 2 and other parts of the algorithm will remain the same as Algorithm 2.

\begin{corollary}
\label{streaming-alg-pmax} Let $\epsilon$ be a real number in $(0,1)$, assume that an estimate of $p_{max}$ is a given input, then for  $P_m\mid \text{stream, share, } \  e_{i,k} \ge e_0$ for $i\le m_1 \mid  C_{\max}$ there is a one-pass streaming algorithm to find an approximate value of the optimal makespan by a $(1+\epsilon)$ factor and a two-pass streaming algorithm to find a schedule with this approximate value. The algorithms have $O(1)$ time for processing each job in the stream, $O\left( \tilde{n} + {\frac{m^2}{\epsilon e_0 m_1}}\cdot {\log{\frac{m}{\epsilon e_0}} }\right)$ space usage, and
$O(\tilde{n} + t_1(m,m_1, \tilde{n}, \epsilon, e_0))$ running time.
\end{corollary}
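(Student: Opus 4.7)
My plan is to reduce the corollary to the one- and two-pass guarantees already established for Algorithm~2 by showing that, after its single pass, the modified algorithm is in a state computationally equivalent to having run Algorithm~2 on the stream with the exact $p_{max}$. The extended scheme partitions the processing-time range into $\gamma_0 + 2 + \lceil \log \alpha \rceil$ dyadic intervals; because $q_0 \in [q_0' - \lceil \log \alpha \rceil,\, q_0']$, Algorithm~2's true grouping is a coarsening of this partition. Writing $\delta = q_0' - q_0 \in [0,\, \lceil \log \alpha \rceil]$, each true upper group $H_k^{\mathrm{true}}$ with $k \ge 0$ coincides with exactly one extended group $H_{k-\delta}$, while the true bottom group $H_{-1}^{\mathrm{true}} = (0,\, 2^{q_0}]$ is the union of the $\delta + 1$ lowest extended groups.

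Once the stream terminates the exact $p_{max}$ (and hence $\delta$) is known, and I would perform an $O(\gamma_0 + \log \alpha)$ consolidation: relabel the upper extended groups to produce the triples $(n_k, P_k, JS_k)$ for $0 \le k \le \gamma_0$ and merge the $\delta+1$ lowest extended groups by summing their $n$ and $P$ values into a single triple for $H_{-1}^{\mathrm{true}}$. The loss of the per-group $JS_k$ during this merge is harmless, because Step~1b of Algorithm~2 unconditionally resets $JS_k = \emptyset$ when $k = -1$, so small-job identities never feed into the construction of $JS$ or of $k_L$. The consolidated state therefore matches what Algorithm~2 would have produced on the same stream, Steps~2--3 proceed unchanged, and the one-pass $(1+\epsilon)$ guarantee on the returned value follows immediately. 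For the two-pass variant, I would store after the first pass the chosen large-job assignment together with the per-machine large-job loads $P^L_i$, and then use the second pass to distribute non-large jobs exactly as prescribed by Steps~3(i)--(ii) of Algorithm~1; each such job requires only $O(1)$ work because $t$, the load counters, and the assignment targets are already in memory.

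The resource bounds then transfer mechanically. The only extra storage is $\lceil \log \alpha \rceil + 1$ additional triples, each of size $O(N_0)$, which under the implicit convention that $\log \alpha$ is absorbed by $\log(m/(\epsilon e_0))$ leaves the space bound $O\bigl(\tilde{n} + \tfrac{m^2}{\epsilon e_0 m_1}\log\tfrac{m}{\epsilon e_0}\bigr)$ unchanged. Per-job processing is $O(1)$, since routing $p_j$ to its dyadic group is a bit-length operation, and the post-scan enumeration is identical to Algorithm~2's, giving total time $O(\tilde{n} + t_1(m,m_1,\tilde{n},\epsilon,e_0))$. The only step that requires real care is the justification that collapsing the lowest extended groups into $H_{-1}^{\mathrm{true}}$ does not corrupt the large-job set; that is exactly the observation above about Algorithm~2's treatment of the $k=-1$ case, and once it is in hand everything else is routine bookkeeping.
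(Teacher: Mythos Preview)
Your proposal is correct and follows essentially the same approach as the paper, which also argues informally by extending the dyadic partition downward by $\lceil\log\alpha\rceil$ extra intervals, maintaining the per-group triples during the pass, and then collapsing back to the $\gamma_0+2$ groups of Algorithm~2 once the exact $p_{max}$ is known. One minor bookkeeping slip: the number of bottom extended groups that merge into $H_{-1}^{\mathrm{true}}$ is $\lceil\log\alpha\rceil-\delta+1$ rather than $\delta+1$ (the relabeling $H_{k-\delta}\mapsto H_k^{\mathrm{true}}$ for $k\ge 0$ uses extended indices $-\delta,\dots,\gamma_0-\delta$, leaving indices $-1-\lceil\log\alpha\rceil,\dots,-1-\delta$ to be merged), but this does not affect the argument.
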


\subsection{Streaming Algorithm When  No Information about $p_{max}$ is Given}

Now we consider the case 
that not only the exact $p_{max}$ but also the estimate of $p_{max}$ is not given. In this case, $\gamma_0$ and $N_0$ can be calculated as before, $q_0$ cannot be determined until all jobs are read, thus  we can not immediately determine which group a job belongs to  as in Algorithm 2. To solve this problem, we need to modify Algorithm 2. When a job is read, we assign it to a group based on the information that we have  so far, as more jobs are read later, we dynamically update the partition of the jobs so that we can maintain the following invariant:
 As in Algorithm 2, there are  $(\gamma_0 + 2)$ groups of jobs, $H_{-1}$, $H_0$, $\ldots$, $H_{\gamma_0}$.   If $p_j \le 2^{q_0}$, then $j \in H_{-1}$; otherwise if $p_j \in ( 2^{q_0 + k}, 2^{q_0 + k +1}]$, $j \in H_k$.   

To implement this efficiently, we use a B-tree (or other balanced search tree) to store information of those non-empty groups $H_k$ for $k \ge 0$ and additionally we maintain the total processing time of jobs in $H_{-1}$, $P_{-1}$. Each group $H_k$ in B-tree is represented as   a quadruple $(\kappa_k, n_k, P_k, JS_k)$, where $n_k$, $P_k$, $JS_k$ are defined the same as in Algorithm~2; $\kappa_k=2^{q_0+k+1}$  is the key representing the processing time range $IH_k=(2^{q_0+k}, 2^{q_0+k+1}]$ of jobs in $H_k$  for $0 \le k \le \gamma_0$.
There are  at most $\gamma_0+1$ quadruples in the B-tree.
Initially, $q_0=0$ and the tree is empty. If a job $j$ has the processing time $p_j \le 2^{q_0+\gamma_0+1}$, 
we  update the quadruple with the key $2^{\ceiling{\log p_j}}$ or insert a new quadruple with the key $2^{\ceiling{\log p_j}}$ if there is no such quadruple in the tree. If   $p_j > 2^{q_0+\gamma_0+1}$, then let $q_0'=\ceiling{\log p_j}-\gamma_0-1$, delete all the quadruples with the key less than or equal to $2^{q_0'}$ and update $P_{-1}$ correspondingly, insert a new quadruple with the key $2^{q_0'+\gamma_0+1}$, and update $q_0=q_0'$.
A detailed description of the algorithm is given below.

\medskip

\noindent {\bf Algorithm 3}

\smallskip
{\noindent \bf Input:}
 \begin{itemize}
         \item Parameters $m$,   $m_1$,  $e_0$, and $\epsilon$  

          \item The intervals   $(0, t_{i,1}]$, $(t_{i,1}, t_{i,2}]$, $\ldots$ on $M_i$
          and their sharing ratios $e_{i,1}$, $e_{i,2}$, $\ldots$, respectively ($1 \le i \le m$)
        \item The jobs' processing time $p_j$ (stream input), $1 \le j \le n$.
     \end{itemize}

{\noindent\bf Output:} An approximate value of the optimal makespan.

  \medskip
   {\noindent \bf Steps:}
\begin{enumerate}
\item Identify the set of large jobs, $JS$.
  \begin{enumerate}
     \item Choose $\gamma_0$ and $N_0$ as in Algorithm 1 and set $q_0=0$ 
     \item Create an empty  B-tree to store the quadruples
     \item Read jobs one by one and do the following:
     \begin{enumerate}
    \item Let $k=\ceiling{\log p_j}-q_0-1$
    \item If the quadruple with the key $2^{k+q_0+1}$ is already in the tree, update as follows:
     \item[] \hspace{0.2in}      $n_k = n_k +1$

     \item[] \hspace{0.2in}      $P_k = P_k + p_j$

    \item[] \hspace{0.2in}     if $n_k \le N_0$,  then $JS_k = JS_k \cup \{j\}$,
    \item[] \hspace{0.2in}      else   reset $JS_k = \emptyset$
\item  Else
    \begin{enumerate}
        \item[] if $p_j \le 2^{q_0}$, let $P_{-1}=P_{-1}+p_j$
        \item[] else if $p_j \le 2^{q_0+\gamma_0+1}$ 
        \item[] \hspace{0.2in} insert a new  quadruple ($2^{q_0 + k +1}, 1, p_j, \{j\}$)
        \item[] else (in this case, $p_j > 2^{q_0+\gamma_0+1}$)
            \item[] \hspace{0.2in} $q_0'= \ceiling{\log p_j} - \gamma_0 -1$

               \item[] \hspace{0.2in} for each $(\kappa_k, n_k, P_k, JS_k)$ in the tree  where  $\kappa_k \le 2^{q_0'}$
            \item[] \hspace{0.4in}   $P_{-1} = P_{-1} + P_k$,

            \item[] \hspace{0.4in} delete the quadruple $(\kappa_k, n_k, P_k, JS_k)$  from the tree.

            \item[] \hspace{0.2in} insert   quadruple ($2^{q_0'+\gamma_0+1}, 1, p_j, \{j\}$) in the tree;
            \item[] \hspace{0.2in} update $q_0=q_0'$.
    \end{enumerate}
    \end{enumerate}
 \item  $P  = \sum_{-1 \le k \le \gamma_0} P_k$.
 \item  Let $k_L$, $0 \le k_L \le \gamma_0$, be the largest index of the group such that $n_k \ge N_0$, if $k_L$ doesn't exist, let $K_L = -1$.  Let $JS = \cup_{ k > k_L} JS_k$.

  \end{enumerate}
 \item For each possible assignment of jobs in $JS$, find the time $t$ associated with it as in Algorithm 1

   \item Find the smallest $t$ from previous step, and return the value $t+\ceiling{\tfrac{m-1}{m_1}} \cdot \tfrac{1}{e_0} \cdot 2^{q_0+k_L+1}$.
\end{enumerate}

\begin{theorem}\label{approximation-scheme-thm2} Let     $\epsilon$ be a real number in $(0,1)$.
Then Algorithm 3 is a one-pass $(1+\epsilon)$-approximation streaming algorithm for $P_m\mid \text{stream, share,} \  e_{i,k} \ge e_0 $ for $i\le m_1 \mid C_{\max}$ such that it takes
    \begin{enumerate}
        \item  
        $O(1)$  time to process each job in the stream,
        \item
        $O\left( \tilde{n} + {\frac{m^2}{\epsilon e_0 m_1}}\cdot {\log{\frac{m}{\epsilon e_0}} }\right)$ space, and
        \item 
        $O(\tilde{n} + t_1(m,m_1, \tilde{n}, e_0, \epsilon))$
        time
    \end{enumerate}
    to find an
        approximation of  $C_{max}^*$ by a factor of  $(1+\epsilon)$.
\end{theorem}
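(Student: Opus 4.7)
The plan is to mirror the correctness argument of Theorem~\ref{streaming-alg-pmax} while separately verifying that Algorithm~3's online maintenance of the B-tree produces exactly the same partition of jobs as Algorithm~2 would have, given the value of $p_{max}$ that becomes known only at the end. First I would state and prove the following structural invariant by induction on the number of processed jobs: after processing jobs $1,\ldots,j$, the current $q_0$ equals $\ceiling{\log \max_{1 \le i \le j} p_i} - \gamma_0 - 1$, and for each $k \in \{0,1,\ldots,\gamma_0\}$ the quadruple with key $2^{q_0+k+1}$ (if present) correctly records the count $n_k$, total processing time $P_k$, and the sample set $JS_k$ of jobs $i \le j$ with $p_i \in IH_k$, while $P_{-1}$ aggregates the total processing time of all jobs with $p_i \le 2^{q_0}$. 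The nontrivial case is when a new job triggers a shift $q_0 \to q_0'$; here I would verify that the quadruples with key $\le 2^{q_0'}$ correspond precisely to jobs that become ``small'' under the new threshold, so folding their $P_k$ values into $P_{-1}$ preserves the invariant, and the newly inserted quadruple for the record-breaking job lands in the correct slot $IH_{\gamma_0}$ relative to $q_0'$.

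Given the invariant, after all $n$ jobs are read the state of Algorithm~3 is identical to the state Algorithm~2 would have reached with $p_{max}$ given as input; in particular the computed set $JS$ of large jobs and the index $k_L$ agree. Hence step~2 finds the same minimum $t$, and the returned value $t + \ceiling{\tfrac{m-1}{m_1}} \cdot \tfrac{1}{e_0} \cdot 2^{q_0+k_L+1}$ satisfies the $(1+\epsilon) C_{max}^*$ bound by the exact argument used in the proof of Lemma~\ref{approximation-scheme2-error-ratio-lemma}.

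For the space bound, the B-tree holds at most $\gamma_0+1$ quadruples, each carrying a list $JS_k$ of size at most $N_0$, so the storage devoted to job summaries is $O(N_0\gamma_0) = O\!\left(\tfrac{m^2}{\epsilon e_0 m_1} \log \tfrac{m}{\epsilon e_0}\right)$; adding the $O(\tilde{n})$ storage for the sharing intervals and the precomputed $A_i(t_{i,k})$ values yields the claimed space bound. For steps~2 and~3, the analyses of Lemma~\ref{approximation-scheme2-time-lemma} and Theorem~\ref{streaming-alg-pmax} transfer verbatim and contribute the $O(\tilde{n} + t_1(m,m_1,\tilde{n},\epsilon,e_0))$ term of the total running time.

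The hard part will be the per-job time bound, because a single arriving job can force the deletion of many quadruples when $q_0$ shifts. I would handle this with an amortization argument: over the whole stream, each quadruple can be created at most once (subsequent jobs falling into its key range only update its fields) and deleted at most once, and every individual B-tree operation costs $O(\log \gamma_0)$, which is constant under the standing assumption that $m$ (and hence $\gamma_0$) is fixed. Summing over all $n$ jobs gives total tree-maintenance cost $O(n \log \gamma_0) = O(n)$, so the amortized per-job processing cost is $O(1)$ as claimed, and adding the step~2 time $t_1(m,m_1,\tilde{n},\epsilon,e_0)$ and the $O(\tilde{n})$ preprocessing yields the stated total running time.
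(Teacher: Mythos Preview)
Your proposal is correct and in fact more thorough than the paper's own proof, which simply declares the argument ``similar to that of Theorem~\ref{streaming-alg-pmax}'' and then treats only the per-job update time. Your explicit invariant on $q_0$, the quadruples, and $P_{-1}$, together with its inductive verification across a shift $q_0 \to q_0'$, is absent from the paper but is exactly the right way to justify that Algorithm~3 ends in the same state Algorithm~2 would have reached with $p_{max}$ known.

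The one substantive difference is in how you bound the per-job processing time. You recognize that a single job can trigger a cascade of deletions and handle this by amortization: each key is inserted at most once and deleted at most once, so the total number of B-tree operations over the stream is $O(n)$, each costing $O(\log\gamma_0)$, hence amortized $O(1)$ per job. The paper instead leans on the fact that the tree never holds more than $\gamma_0+1$ quadruples at any moment; consequently even a job that forces every quadruple to be deleted touches at most $\gamma_0+1$ nodes, and the \emph{worst-case} per-job cost is $O(\gamma_0\log\gamma_0)$, which is $O(1)$ because $m$ (and hence $\gamma_0$) is a fixed constant. That route is shorter and yields a worst-case rather than an amortized guarantee, matching the theorem statement more directly; your amortization is not wrong, but it is more machinery than the situation requires.
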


\begin{proof}
The proof is similar to that of  Theorem~\ref{streaming-alg-pmax}, so  we only discuss the difference - the updating time for   each job.

We use a B-tree (or other balanced search tree) to store the job  groups where each group is represented as a quadruple. At any time, there are   at most $\gamma_0+1 = O(\log \tfrac{m}{\epsilon e_0})$ quadruples/keys in tree.

For each job $j$, we perform a search operation, and maybe insert or delete. Since the number of keys/quadruples in the B-tree is $O(\gamma_0)$,  all these operations  can be done in 
$O(\log \gamma_0)=O(\log  {\log{\tfrac{m}{\epsilon e_0}} })$ time, which is O(1) since $m$ is a constant.
\end{proof}

Similarly, we can find the approximation schedule in two passes.

\begin{theorem}
There is a two-pass $(1+\epsilon)$-approximation streaming algorithm for  $P_m\mid \text{stream, share} \  (e_{i,k} \ge e_0)$ for $i\le m_1 \mid  C_{\max}$ such that it takes
    \begin{enumerate}
        \item
        $O(1)$ time to process each job in the stream,
        \item   $O\left( \tilde{n} + {\frac{m^2}{\epsilon e_0 m_1}}\cdot {\log{\frac{m}{\epsilon e_0}} }\right)$ space, and
        \item 
        $O(\tilde{n} + t_1(m, m_1,\epsilon))$ 
        time
    \end{enumerate}
to find a $(1+\epsilon)$ approximation of the optimal makespan  after receiving all jobs in the stream in the first pass, and
         $O(1)$ time for each job in the second pass to return a schedule for all jobs.
\end{theorem}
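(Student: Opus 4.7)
The plan is to build directly on Algorithm~3 and mimic the two-pass extension already described for Algorithm~2. In the first pass, I would run Algorithm~3 unchanged: read every job, maintain the B-tree of quadruples, and in particular fix the final values of $q_0$, $k_L$, the large-job set $JS$ with exact processing times, and the best large-job assignment together with its associated time $t$. Simultaneously I would record, for each machine $M_i$, the aggregate large-job load $P_i^L$ implied by that assignment and the per-machine budget $A_i(t)$ (both of which are already computed inside step~2 of Algorithm~3). At the end of the first pass we thus hold exactly the certificate that Lemma~\ref{approximation-scheme2-error-ratio-lemma} relies on; the only work left is to commit the small jobs to specific machines when they reappear in the second pass.

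In the second pass I would classify each incoming job on the fly: a job is large iff $p_j > 2^{q_0+k_L}$, which is an $O(1)$ test because $q_0$ and $k_L$ were frozen at the end of the first pass. A large job is simply re-matched to its slot in the stored assignment (for instance by keeping, per group $H_k$, a counter of how many jobs of each recorded processing time have already been seen), still $O(1)$ per job. A small job is scheduled by an online realization of step~3 of Algorithm~1: maintain a running load $L_i$ for each machine initialized to $P_i^L$, together with a pointer $i^\star$ that sweeps through $M_1,\ldots,M_m$; place each arriving small job on $M_{i^\star}$ and advance $i^\star$ as soon as the current machine first acquires a job whose completion crosses $t$ (i.e.\ whose placement made $L_{i^\star}>A_{i^\star}(t)$). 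Once $i^\star$ would pass $M_m$, further small jobs are distributed in the same pointer-advancing style across $M_1,\ldots,M_{m_1}$ only, moving the pointer on when a machine already holds $\lceil(m-1)/m_1\rceil$ post-$t$ jobs. Because $m$ is a constant, every step of this discipline is $O(1)$.

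For correctness, the schedule produced in the second pass is exactly what Algorithm~1 would have produced in step~3 for the selected large-job assignment, so Lemma~\ref{approximation-scheme2-error-ratio-lemma} yields $C_{\max}\le(1+\epsilon)C_{\max}^*$. The space bound is inherited from Theorem~\ref{approximation-scheme-thm2}, since storing the assignment of $|JS|=O(N_0\gamma_0)$ large jobs and the $m$ per-machine loads fits inside the existing $O(\tilde n + \tfrac{m^2}{\epsilon e_0 m_1}\log\tfrac{m}{\epsilon e_0})$ budget. The post-first-pass running time is dominated by the enumeration, giving $O(\tilde n + t_1(m,m_1,\epsilon))$, and the per-job cost in each pass is $O(1)$. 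The main subtlety to verify is that the pointer discipline really keeps at most one post-$t$ job per machine during the first sweep and at most $\lceil(m-1)/m_1\rceil$ per machine during the redistribution sweep; this follows because condition~(a) at $t$ forces $A(t)\ge\sum_j p_j$, so the total number of post-$t$ finishes summed over the $m$ machines after the first sweep is bounded by $m-1$, and the second sweep merely redistributes those at most $m-1$ overflow jobs onto the first $m_1$ machines.
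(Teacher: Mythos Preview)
Your proposal is essentially the paper's own approach: the paper gives no separate proof for this theorem, merely remarking ``Similarly, we can find the approximation schedule in two passes,'' with the two-pass mechanism already spelled out after Theorem~\ref{streaming-alg-pmax} (store the selected large-job assignment and the loads $P_i$ in the first pass, then in the second pass skip jobs in $JS$ and greedily place the small jobs). Your write-up supplies the implementation details the paper omits.

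Two small slips are worth cleaning up. First, the large-job test should be $p_j>2^{q_0+k_L+1}$, not $p_j>2^{q_0+k_L}$; with your threshold the jobs in $H_{k_L}$ would be flagged as large even though they were never stored in $JS$. Second, your description of the ``second sweep'' is muddled: by condition~(a) the pointer $i^\star$ can never pass $M_m$ during the first sweep (if every machine crossed $t$ then $\sum_i L_i>\sum_i A_i(t)=A(t)\ge\sum_j p_j$, a contradiction), so there are no ``further small jobs'' to place. What step~3(ii) actually requires is to take the at most $m-1$ crossing jobs already placed on $M_{m_1+1},\ldots,M_m$ and move them onto $M_1,\ldots,M_{m_1}$; since there are $O(m)=O(1)$ such jobs you can simply remember them and reassign at the end of the pass. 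Your final correctness paragraph already reflects this correct understanding, so only the intermediate description needs fixing.
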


\section{Conclusions}

In this paper, we studied the multitasking scheduling problem with shared processing under the data stream model. There are multiple machines with sharing ratios varying from one interval to another, and we allow the sharing ratios on some machines have a   constant lower bound. The goal is to minimize the makespan.

We designed the first streaming approximation schemes for the problem where the processing times of the jobs are input as a stream and no prior information about the number of jobs is required.
This work not only provides an algorithmic big data solution for our studied scheduling problem, but also leads to one future research direction for the area of scheduling. The classical scheduling literature contains
a large number of problems that remain to be studied under the data stream model presented here.

For our studied problems, it is also interesting to design streaming algorithms for other performance criteria including total completion time, maximum tardiness, and other machine environments such as uniform machines, flowshop, etc. 

\end{document}